\documentclass[11pt]{article}
\usepackage[T1]{fontenc}
\usepackage[utf8]{inputenc}
\usepackage{amsmath,amssymb,amsthm}
\usepackage{booktabs,tabularx,longtable,array}
\usepackage[margin=1in]{geometry}
\usepackage{tikz}
\usetikzlibrary{arrows.meta,positioning}
\usepackage{xurl}
\usepackage[hidelinks]{hyperref}
\newtheorem{proposition}{Proposition}

\newcommand{\auth}{\mathsf{authorize}}
\newcommand{\sha}[1]{\path{#1}}

\hypersetup{pdftitle={Omission-Safe Conditional Contracts for Evidence-Closed CI Qualification},pdfauthor={Zhiheng Zhang}}
\title{Omission-Safe Conditional Contracts for Evidence-Closed CI Qualification}
\author{Zhiheng Zhang\\School of Computing and Information Systems\\The University of Melbourne\\Melbourne, Australia\\\href{mailto:zhiheng0mera@gmail.com}{zhiheng0mera@gmail.com}\\\href{https://orcid.org/0009-0005-9005-9931}{ORCID: 0009-0005-9005-9931}}
\date{}
\begin{document}
\maketitle
\begin{abstract}
Continuous-integration evidence is conditional: a workflow determines which jobs should run, while an archive records which jobs remain observable. Reconstructing requirements from those observations lets an omission erase its own obligation. We present a qualification method that compiles a pinned workflow into conditional job requirements independently of target presence, retains a fixed capture inventory, and checks source-complete derivation certificates with a separate implementation. For a fixed source, context, scope and binding universe, with sound obligations and applicability judgments that preserve established truth under information extension, withdrawing observation facts cannot create authorization. The implementation handles static matrices, dependency guards and bounded pinned reusable workflows, preserving unknown applicability and separating tolerated outcomes from evidence coverage. Both implementations match the 45 hand-specified authorization controls, reject 180 source-certificate attacks, and exhibit no authorization gain over 704 distinct proper development erasures. An explicitly witnessed 12-job platform run qualifies in both implementations; all 4,095 proper payload-subset erasures block authorization. A frozen collection spans 240 runs in eight workflow families across six projects. All 160 evaluation source certificates independently validate, but unavailable historical execution-source identities leave actual qualification open. Initial gate diagnoses agree on 155/160 runs; a preserved unknown-propagation repair yields 160/160 agreement in a retrospective rerun. The study establishes bounded source extraction and omission resistance, rather than natural-fault prevalence or producer authenticity.
\end{abstract}

\section{Introduction}
A successful CI record describes an observed outcome. Using it to justify a claim also requires deciding what execution the claim demanded. In a conditional workflow, that requirement depends on source-level guards, matrix expansion, reusable definitions and predecessor results. These dependencies are lost when an evidence consumer treats the jobs present in an archive as the complete set of jobs that mattered.

Consider a workflow that builds two configurations and then checks their results. If one configuration is absent, a consumer that rebuilds its requirements from surviving records can accept the remaining success. A different omission can remove one of two identically named records and make an ambiguous binding appear unique. Neither error is repaired by validating the internal consistency of the remaining records. The requirement and the binding alternatives must survive independently of their payloads.

Evidence-Closed Qualification separates evidence closure, gate satisfaction and claim authorization. This study develops its conditional CI front end: requirements are compiled from a workflow rather than supplied only as a static gate list. Our method separates three objects: the obligation universe compiled from source, the inventory fixed at capture, and the facts currently available about each captured record. Conditional applicability is evaluated with explicit unknowns. False applicability discharges execution coverage without claiming that a test ran; unavailable applicability blocks authorization. A tolerated failure still requires an observed job and contributes the platform's accepted result to downstream scheduling.

The contributions are a source-derived conditional qualification contract with a fixed-inventory erasure semantics; a bounded compiler and separately authored source-completeness checker; and an evaluation combining hand-specified controls, source-certificate attacks, exhaustive bounded erasures, platform conformance observations and a prospectively selected multi-project archive. The archive study distinguishes extraction from execution-source identification: a complete collection of job outcomes does not itself establish which source and event context governed their execution.

\section{Relation to Existing Methods}
Continuous compliance and assurance tooling connect changing development observations to persistent requirements \cite{kellogg2020continuouscompliance,sorokin2024continuous}. The Evidential Tool Bus already supports Datalog-based assurance workflows, evidence hashing and dependency-driven maintenance. Goal-directed ASP analyzes semantic consistency, adequacy and defeaters in assurance cases \cite{murugesan2024semantic}. These are substantial precedents for executable assurance reasoning. Our narrower decision concerns whether withdrawal of CI observations can change the obligations against which those same observations are judged.

in-toto fixes an independently declared supply-chain layout and verifies authorized functionaries and artifact rules against link metadata \cite{torresarias2019intoto}. That separation motivates keeping requirements outside observed payloads. PROV constraints and SHACL express consistency and validation rules over structured facts \cite{w3c2013provconstraints,w3c2017shacl}. The contribution here is a conditional CI front end and its capture boundary, rather than a new general constraint language or cryptographic authentication mechanism.

Static workflow validation addresses a related but different input. Horber and Sotiriadis combine syntax, schema, registry and execution-graph checks, evaluating 843 workflow configurations and a labeled comparison with actionlint \cite{horber2026validation}. Their DAG and dependency checks precede execution; their reported scope limits runtime-dependent expressions and cross-workflow resolution. Our checker reads a supplied source closure together with archived observations, retains unresolved historical context, and checks that the observation set has not silently determined the requirement set. Syntax-valid workflows can therefore yield open qualification, and well-formed records of failure can yield closed evidence with an unsatisfied claim.

\section{Conditional Qualification Contracts}
\subsection{Inputs and trust boundary}
Write the fixed input boundary as
\[
 B=(W,K,S,J,I,A),
\]
where $W$ is the supplied source closure, $K$ the immutable context roots, $S$ the declared claim scope, $J$ the source-derived job universe, $I$ the capture inventory and $A$ the capture and source-identity assertions. The source closure includes the root workflow and supplied reusable definitions at immutable revisions. An inventory entry records an observed record identity, display name and run binding, including entries that are unmatched or ambiguous. The available payload facts form a well-formed partial map $x$ over identities fixed in $I$; malformed boundary metadata is rejected rather than interpreted as an observation erasure.

The boundary is relative to the collector. An identity assertion or a digest of an API response does not authenticate the producer, prove the collector saw every event, or establish the historical root workflow revision. In particular, a PR head commit can differ from the merge revision used for a local reusable workflow. Capture completeness, event-context completeness and claim scope are explicit inputs fixed before erasure; changing one is a boundary amendment. A cancellation assertion must describe the context at guard evaluation; an arbitrary final run status does not establish that historical context.

\begin{figure}[t]
\centering
\begin{tikzpicture}[node distance=8mm and 9mm,box/.style={draw,rounded corners,align=center,text width=35mm,minimum height=11mm},>=Stealth,font=\small]
\node[box] (source) {Pinned source\\and context roots};
\node[box,right=of source] (slots) {Conditional slots\\and source witnesses};
\node[box,below=of source] (capture) {Fixed inventory\\and binding alternatives};
\node[box,right=of capture] (facts) {Partial payload facts\\including unknowns};
\node[box,right=of slots,yshift=-10mm] (decision) {Independent checks\\closure / gate / authority};
\draw[->] (source)--(slots);
\draw[->] (capture)--(facts);
\draw[->] (slots)--(decision);
\draw[->] (capture.south) -- ++(0,-6mm) -| ([xshift=4mm]decision.south);
\draw[->] (facts)--(decision);
\end{tikzpicture}
\caption{Fact withdrawal changes the lower observation path, not the source requirement universe or capture binding alternatives.}
\label{fig:boundary}
\end{figure}

\subsection{Source expansion and applicability}
The compiler parses YAML 1.2, retains source order, checks the needs DAG and expands bounded static matrices. Ordered include augmentation is applied against original axis assignments; exclude entries remove matching assignments. Names use original axis order and evaluate caller and callee display expressions separately. Include-only matrices and metadata augmentation are distinct: extra metadata does not automatically become an implicit name suffix.

Pinned reuse is expanded hierarchically. Wrapper activation precedes child activation, child dependency names are namespaced, and aggregated child results supply downstream needs. Input defaults and declared types are checked. Known callee source is retained even when an unavailable or ill-typed input leaves its names or activation unresolved. Unavailable source, unsupported dynamic expansion and depth overflow remain blocking unknowns; they do not justify dropping a callee from the contract. Recursion is bounded to three reusable levels and each matrix to 256 cells.

The expression subset includes typed scalar literals, dotted or bracketed lookup, equality, negation, operand-returning conjunction and disjunction, and status functions. The concrete rules follow GitHub's coercion and status behavior \cite{github2026expressions,github2026syntax}. Partial evaluation distinguishes a missing property of a known complete object from an unavailable historical fact. Our live controls observe null inside expressions and empty text after rendering: the missing scalar compares equal to the string \texttt{'0'}, while a literal empty string does not. The implementation retains that typed distinction; unavailable historical facts remain unknown. Generated needs outputs remain partial even when an external event object was completely captured.

A partial scalar is reported concrete only when every compatible completion has that value. For example, an unknown left operand conjoined with false has false truthiness, but can return an erased empty string rather than the scalar false. The implementation conservatively retains an unknown scalar in such cases. This loses precision, while preserving the information-order condition used below.

Job guards precede matrix expansion in scheduling semantics. A false matrix group can have one unexpanded skipped placeholder rather than one record per potential cell. The observer permits absence or a uniquely attributable skipped group placeholder. It never expands that placeholder into successful executions. Unknown guards cannot be replaced by false. An owned platform probe establishes the bounded reusable reduction used here: a nonempty all-skipped child set yields a skipped caller result, while successful and skipped children yield success. An unresolved child remains unknown, so a downstream default-success guard cannot waive coverage by guessing that reduction. An implicit success condition applies when no actual status-function call overrides it; quoted text resembling a function call does not override it.

Own-property lookup follows ASCII case-insensitive context keys, retaining unknown for ambiguous captures or unavailable roots. The bounded profile uses finite scalars with magnitude at most $2^{53}-1$, shared decimal conversions and unsigned lowercase hexadecimal/octal string conversions up to $2^{31}-1$, rather than assuming that JavaScript and runner conversions coincide \cite{github2026implementations}. Unsupported numeric strings, nonidentical non-ASCII case comparisons and unresolved structured or nonintegral display names remain unknown. Failure tolerance requires an evaluated Boolean policy. Typed inventory identities, claim scopes and completeness assertions are checked before authorization. These restrictions preserve partial-information judgments without claiming full GitHub expression-language equivalence.

\subsection{Capture binding and decision axes}
Expected names are derived from source; observed names are retained in $I$. Binding requires a unique source slot and a unique capture candidate. Payload withdrawal cannot delete an entry from $I$ or resolve its ambiguity. Every captured record retains coverage, identity, state and source-attribution obligations, including records of inactive groups. A success attached to an inactive group is a contradiction; erasing its payload leaves an unresolved captured obligation.

Evidence closure is \textsc{Closed} when structural obligations are established and \textsc{Open} otherwise. Gate satisfaction is a three-valued conjunction of accepted local outcomes and the scope's non-vacuity predicate. An established rejecting outcome yields \textsc{Unsat}; an unresolved outcome or applicability yields unknown unless another established rejection already decides the conjunction. Source-identity uncertainty can leave evidence open even when all supplied local outcomes satisfy the candidate contract. Authorization requires both closed evidence and a satisfied gate.

Two scopes are explicit. Conformity to the job contract can be satisfied when all groups are inapplicable. A non-vacuous execution claim additionally requires at least one established active leaf in its declared job scope. An accurate all-skipped archive can therefore be closed while failing that execution claim. The interface names this mode \texttt{test-execution}; it denotes job-level non-vacuity, not identification or execution of individual test steps. A job-level claim establishes no fact about arbitrary shell commands or whether each test step inside a successful job actually ran.

\subsection{Source-complete certificates}
Each compiled leaf carries the hash of its source bytes, a path to the parsed job node, its matrix index and the caller chain for reusable expansion. The certificate also includes source identities, guards, dependency aliases, input bindings and complete group/instance lists. These are dereferenceable derivation inputs rather than identifiers asserted to be JSON pointers. Byte offsets are diagnostic metadata.

The independent checker reparses the supplied YAML closure and reconstructs the full universe and relevant policies. It checks the semantic projection of the certificate, including duplicate source-witness lists and caller provenance. Comparing only the submitted witness count would permit a producer to omit the requirement and its witness together. Recompilation prevents a forged certificate from narrowing the source obligation universe. Claim scope remains a separately fixed boundary input; the source certificate does not authenticate it. These checks establish consistency with the supplied source boundary, not authenticity of that boundary.

\section{Omission Safety}
For fixed $B$, define $x\preceq y$ when every concrete payload coordinate of $x$ occurs identically in $y$. Unknown is the least informative coordinate. An erasure $e$ satisfies $e(x)\preceq x$; it can withdraw scalar facts or entire payloads while retaining inventory tombstones. Editing a value, deleting a capture candidate, changing completeness assertions or narrowing claim scope is outside this order.

\begin{proposition}[Fixed-boundary omission safety]
Suppose source enumeration and binding completeness are independently checked; positive obligation judgments are sound and preserve established truth as information increases; applicability guards are sound and preserve either established truth value; unknown obligations cannot authorize; and qualification does not narrow scope from surviving evidence. Then
\[
 e(x)\preceq x\ \wedge\ \auth_B(e(x))\quad\Longrightarrow\quad\auth_B(x).
\]
\end{proposition}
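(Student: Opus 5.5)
The plan is a monotonicity argument. The key fact is that the obligation set depends only on the boundary $B$, never on the payload map. Since $\auth_B$ is a conjunction of \textsc{Closed} evidence and a satisfied gate, it suffices to show two things. First, every positive judgment that $\auth_B$ consumes is monotone in $\preceq$. Second, the set of judgments it consumes is the same for $e(x)$ and for $x$. Then authorization at the less informed point $e(x)$ transfers upward to $x$.

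\textbf{Step 1: the universe is fixed.} Because source enumeration is independently checked, the job universe $J$ is recompiled from $W$ and $K$ alone. Because binding completeness is checked against the fixed inventory $I$, the set of slots, capture candidates and binding alternatives is the same under $e(x)$ and $x$. Erasure only withdraws payload coordinates on identities already fixed in $I$. It therefore cannot delete a candidate, resolve an ambiguity or alter $S$, and scope is not narrowed from surviving evidence. Hence both evaluations range over the same finite obligation list $O_B$: coverage, identity, state and source-attribution obligations for each slot and captured record, plus the scope's non-vacuity predicate.

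\textbf{Step 2: each obligation transfers from $e(x)$ to $x$.} Suppose $\auth_B(e(x))$ holds. Since unknown obligations cannot authorize, every $o\in O_B$ is established at $e(x)$. There are two cases.
\begin{itemize}
\item If $o$ is discharged positively (an observed accepted outcome, a unique binding, established attribution, or an established active leaf for non-vacuity), the preservation hypothesis for positive judgments gives that $o$ is still established at $x \succeq e(x)$.
\item If $o$ is discharged through an applicability guard, the guard has an established truth value at $e(x)$, and guard preservation keeps that same value at $x$. A guard false at $e(x)$ is still false at $x$, so coverage is discharged in the same way. A guard true at $e(x)$ is still true at $x$, and its dependent obligation falls under the first case.
\end{itemize}
Since gate satisfaction is the three-valued conjunction over these same judgments, it is satisfied at $x$ as well.

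\textbf{Step 3: no new rejection appears at $x$.} This is the main obstacle. Additional facts in $x$ might establish a rejecting outcome or a contradiction, for example a success attached to an inactive group, that was hidden in $e(x)$ and would turn the gate to \textsc{Unsat}. My approach is to argue from soundness and the fixed inventory. Every captured record in $I$ carries its own obligations regardless of payload availability. At $e(x)$ each such obligation was established, not unknown, and the established value was accepting. By soundness together with preservation of established truth, the accepting value persists at $x$. So a coordinate of $x$ cannot contradict it, since it is the same fact or an extension compatible with it. The consequence is that any obligation whose value could flip must have been unknown at $e(x)$, which would already have blocked authorization.

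If this transfer of established values does not quite cover contradiction obligations, I would reformulate each contradiction check as a positive "no conflict" obligation on the captured record. Its establishment at $e(x)$ then also falls under the monotonicity hypothesis. Combining Steps 1 through 3 yields \textsc{Closed} and a satisfied gate at $x$, hence $\auth_B(x)$.
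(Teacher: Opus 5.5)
Your proposal is correct and follows essentially the paper's argument. Authorization at $e(x)$ forces every conjunct over the boundary-fixed obligation set to be established. The soundness and persistence hypotheses then carry each established binding, coverage and accepted outcome, each guard truth value, and the non-vacuity witness up to $x$; the paper phrases this transfer via every completion compatible with $x$ also being compatible with $e(x)$.

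Your Step~3 is not a separate obstacle. Once every conjunct over the same fixed set is established accepting at $x$, no new rejection can arise, so it follows from Step~2.
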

\begin{proof}
Every completion compatible with $x$ is also compatible with $e(x)$. An obligation established under $e(x)$ remains semantically valid under $x$, and the persistence assumption ensures that the evaluator retains its established status. A guard established false under the erased observations is false in every such completion; a guard established true remains true. Each active slot's established binding, coverage and accepted outcome persists, as do the independent source and capture obligations. Any established active-leaf witness for non-vacuity persists. Thus every conjunct required for authorization under the erased observations also holds under the more informative observations.
\end{proof}

Soundness alone is insufficient for the algorithmic statement: a sound evaluator could report true on a less informative input and unknown on a more informative one. The persistence assumption excludes that loss of an already established judgment. The fixed boundary is also necessary. If requirements are rebuilt from survivors, deleting an absent-required-job indication can delete that job's requirement. If binding candidates are rebuilt from survivors, deleting one of two same-name candidates can turn ambiguity into uniqueness. If unknown applicability is treated as false, removing the deciding predecessor fact can waive an execution obligation. Persistent inventories address the second error; source-derived slots and unknown blocking address the first and third. The proposition supplies a method invariant, while finite tests check whether the implementation respects it on particular inputs.

\section{Evaluation Design}
\subsection{Implementations and controls}
The native compiler/observer uses Node.js and the independent reference/checker uses Python with a separate YAML parser. Checking predicates are not imported between implementations. Shared schema and serialization conventions permit comparison of source-derived contracts. Runtime versions, exact dependency locks, source hashes and raw outputs accompany the artifact.

The hand-specified development controls use job-level YAML fragments to cover active success and failure, cancellation, absent jobs, false and unknown guards, tolerated failure, needs propagation, matrix expansion and naming, duplicate and extra records, source/run mismatches, generated context completeness and reusable expansion. They are abstract contract controls, not separately runnable CI deployments. Labels state authorization independently of either implementation. Regressions discovered by adversarial inspection remain identified as development cases; live workflows supply the platform conformance evidence. The final control set has 45 cases (14 authorizable and 31 rejecting), with 11 independently specified closure/gate pairs. The erasure procedure executes 794 checks over 749 distinct base/input pairs: 45 original inputs and 704 proper erasures. The repeated identity transformations are not counted as additional cases. All generator amendments precede the first evaluation-run analysis.

We compare three authored front-end ablations using the same native qualification backend: treating every job as unconditionally required; rebuilding source groups and capture inventory from surviving jobs; and treating unknown guards as inactive. These deliberately isolate information discarded by the front end. They are not estimates of production-tool prevalence or comparisons with tools that implement the full declared contract. The separately authored full reference is the principal semantic comparison.

\subsection{Prospective archive and platform observations}
The collection protocol fixes eight build/test workflow families across six projects, a 2026-10-27 UTC cutoff and within-family success/failure strata. It selects five successful and five failed development runs, and ten successful and ten failed evaluation runs per family. All preflight run IDs are excluded. Each exact run attempt is collected through paginated jobs endpoints, with immutable source snapshots, referenced reusable definitions and raw response hashes. Empty archives and API errors remain selected.

The 240 source runs comprise 80 development and 160 evaluation runs. All 34 complete zero-job archives are failures; their cause is not established by the archive. One selected success has an incomplete jobs API response and remains in the analysis. Six repository/head revisions occur in both splits, so the evaluation is disjoint by run, not fully held out by commit. Related workflow families are grouped under their six projects.

Root execution revisions and full historical event payloads remain unavailable for all 240 runs under the declared source-identity criterion. Consequently the study reports candidate-source enumeration, mapping and independent checking separately from actual qualification. It does not label a workflow success as gold authorization or fill missing context from the jobs that happened to survive. Owned platform probes directly exercise concrete expression coercions, skipped matrices, tolerated failure and reusable input behavior without running collected third-party commands.

\subsection{Measurements}
Controls measure authorization errors, bounded erasure gains, certificate attack rejection and agreement between implementations. Archive measurements report source instances, unique leaf-name matches, unsupported groups, certificate disagreements and actual authorization while preserving unavailable identity facts. Record and transformation counts are distinct from project counts.

Resource measurements separate source compilation, qualification including defensive source reconstruction, and independent certificate checking. Static matrices contain 1, 10, 100 or 256 cells, with five warmups and thirty measured rounds per runtime/configuration under a 180-second configuration budget. Completed cells supply medians; incomplete cells would be censored. Repeated rounds quantify this host's execution cost, not a population speed advantage or an algorithmic comparison between programming languages.

\section{Results and Mechanisms}
\subsection{Controlled qualification and certificate integrity}

Both full implementations match all 45 specified authorizations and all 11 specified closure/gate pairs. The independent checker accepts all genuine control certificates and rejects all 180 attacked certificates. No proper erasure in the 704-input development domain creates authorization. These are finite development checks of the declared invariant.

\begin{table}[t]

\centering

\begin{tabular}{lrr}

\toprule

Method & False authorization / 31 & False blocking / 14 \\

\midrule

Native full profile & 0 & 0 \\

Independent full reference & 0 & 0 \\

Unconditional requirements & 1 & 5 \\

Survivor-derived universe & 12 & 2 \\

Unknown treated inactive & 3 & 0 \\

\bottomrule

\end{tabular}

\caption{Authorization disagreements on hand-specified development controls. The three lower rows are front-end ablations with the same native backend.}

\label{tab:controls}

\end{table}

The ablations distinguish different mechanisms. Unconditional requirements reject legitimate inactive groups and can accept an execution that contradicts its actual false guard. Survivor-derived requirements can remove both a missing slot and its obligation, while rebuilding inventory can remove a conflicting candidate. Unknown-as-inactive removes an unresolved requirement. The complete profiles retain those source and capture obligations.

\subsection{Candidate-source extraction and unavailable execution identity}

The 160 evaluation runs contain 2,453 captured job records. Both compilers produce source-check-compatible certificates for all 160 candidate snapshots, with 2,859 potential leaf instances. There are 2,201 uniquely matched leaf-name records (89.7\%); this metric excludes group placeholders and is not a complete qualification rate.

\begin{table}[t]

\centering\small

\begin{tabular}{lrrrr}

\toprule

Workflow family & Runs & Records & Unique leaf matches & Unsupported runs \\

\midrule

HTTPX test-suite & 20 & 100 & 100 & 0 \\

Fastify CI & 20 & 260 & 248 & 0 \\

Fastify integration & 20 & 24 & 24 & 0 \\

Undici CI & 20 & 522 & 522 & 0 \\

Undici historical nodejs & 20 & 400 & 160 & 20 \\

Flask tests & 20 & 261 & 261 & 0 \\

Requests tests & 20 & 470 & 470 & 0 \\

pytest tests & 20 & 416 & 416 & 0 \\

\bottomrule

\end{tabular}

\caption{Evaluation split; candidate-source mapping rather than execution-source authentication. Related families share a project.}

\label{tab:archives}

\end{table}

The 252 records without a unique leaf-name match have two observed causes. Twelve are the unexpanded Fastify matrix placeholders, six each for unit and compatibility groups. The other 240 are historical Undici test leaves whose numeric matrix values are supplied to string-typed reusable inputs in the candidate snapshot. The compiler retains the complete callee universe with unknown inputs, rather than deriving names from the observed suffixes. All 20 evaluation runs in that family retain an unsupported-input obligation.

All 240 collected runs lack an independently witnessed historical root execution revision under the declared criterion; the 160 evaluation runs therefore remain open and unauthorized in both implementations. The successful source-certificate checks validate extraction from supplied candidate snapshots. They do not upgrade those snapshots into established execution configurations.

The first frozen evaluation gives agreement on closure and authorization in all 160 runs, and on the gate in 155/160. The five discrepancies occur in Fastify CI: unavailable cancellation context makes upstream reusable guards unknown, but the native observer propagated the captured successes as known needs results. That incorrectly activated downstream coverage requirements and produced an unsatisfied gate where the reference retained unknown applicability. The amendment preserves an unknown generated result whenever applicability is unknown. A retrospective rerun agrees on all 160 gates without changing source selection, reference semantics or authorization. The original results, method snapshot and five counterexamples remain archived; the repaired rerun is not a second blinded evaluation.

Subsequent conformance hardening checks ASCII-insensitive property names, bounded numeric conversions, typed capture identities and frame assertions, Boolean failure policies, and reusable caller aggregation. These are separately preserved retrospective amendments. Their final replay retains all evaluation source certificates and decisions; they do not replace the original frozen result with a new blinded claim.

\subsection{Witnessed platform behavior and whole-record erasure}

The owned conformance workflow supplies an explicit workflow revision witness in addition to API job metadata. Its original capture has 12 records, eight active leaves and four inactive placeholders. Both implementations classify it as closed, satisfied and authorized, and independently validate its source certificate. The raw tolerated job remains a failure, while its dependent success condition runs and its failure condition is skipped. The matrix guard produces one skipped placeholder before expansion. Missing-property probes observe null within expressions and empty text after rendering, preventing a text-level missing-value convention from being used as a scalar type rule.

\begin{table}[t]

\centering

\begin{tabular}{lrrl}

\toprule

Payload subset & Count & Authorized & Closure / gate \\

\midrule

Original capture & 1 & 1 & Closed / Sat \\

Only inactive payloads withdrawn & 15 & 0 & Open / Sat \\

Any active payload withdrawn & 4080 & 0 & Open / Unknown \\

\bottomrule

\end{tabular}

\caption{All $2^{12}$ payload subsets retain the original capture inventory. Both implementations give the same axes.}

\label{tab:platform}

\end{table}

The 15 inactive-only erasures preserve the execution gate but leave captured consistency obligations unresolved. The 4,080 subsets withdrawing an active payload also lose an execution fact. Thus withdrawing an inactive record and declaring that no such record was ever captured are observably different inputs. The latter changes the trust boundary.

Windows and Linux replay at the frozen method revision reproduce the same 45 control decisions, 180 attack rejections and 794 enumerated erasure checks. These are repeated executions of the same cases, not additional independent samples. The exhaustive owned-run study is a controlled platform case, distinct from the six-project archive.

Platform observations also distinguish an implementation assumption from the actual runner: binary numeric strings and uppercase hexadecimal prefixes do not behave like JavaScript Number conversion, while large hexadecimal conversion follows the runner integer domain. Mixed-case context properties do resolve. The final profile retains unknown outside its verified conversion domain instead of allowing an unsupported comparison to manufacture an inactive guard. The subset replay reuses only pure compilation of the identical fixed frame; checking predicates are unchanged, cached results are checked against uncached samples, and resource timings use the uncached drivers.

\subsection{Bounded execution cost}

\begin{table}[t]

\centering\small

\begin{tabular}{llrrrr}

\toprule

Runtime & Cells & Compile & Qualify & Certificate check & Measured rounds \\

\midrule

Node native & 1 & 0.398 & 0.437 & -- & 30 \\

Node native & 10 & 0.461 & 0.614 & -- & 30 \\

Node native & 100 & 1.705 & 3.510 & -- & 30 \\

Node native & 256 & 3.031 & 9.547 & -- & 30 \\

Python reference & 1 & 2.903 & 2.966 & 3.070 & 30 \\

Python reference & 10 & 4.410 & 4.622 & 4.874 & 30 \\

Python reference & 100 & 20.021 & 25.089 & 23.773 & 30 \\

Python reference & 256 & 45.835 & 72.029 & 54.513 & 30 \\

\bottomrule

\end{tabular}

\caption{Median milliseconds on one Windows 11 / Intel i7-14650HX host. Qualification includes defensive source recompilation; stage medians are not additive.}

\label{tab:cost}

\end{table}

All eight cells complete their 30 measured rounds after five warmups, yielding 240 measured runtime/configuration rounds without censoring. The runtimes are Node.js 24.14.0 with yaml 2.8.3 and Python 3.14.3 with ruamel.yaml 0.18.16. Independent certificate reconstruction adds a separately measured parsing/checking stage. Differences include parser and implementation choices; they do not establish an intrinsic algorithmic advantage or cross-machine speedup.

\section{Scope and Limitations}
The result is relative to fixed source, context and capture assertions. A malicious collector can omit a record before inventory capture; hashes and the omission proposition do not detect that omission. A different authenticated capture mechanism would establish a different boundary. The public archive's unavailable execution revisions illustrate why internally consistent outcomes alone cannot establish source closure.

The implemented language is a bounded subset of GitHub Actions. Dynamic matrices, unsupported functions and unavailable inputs retain unknown obligations. Conservative scalar abstraction can reject archives whose guard truthiness could be decided by a richer domain. Job-name binding can remain ambiguous after source expansion, especially when distinct jobs share display names. These cases limit usable qualification without turning uncertainty into authorization.

The hand controls and adversarial regressions establish particular behaviors, not a natural-fault detection rate. The multi-project collection broadens source and job structure beyond single-job documentation workflows, but its prospective stratification is not a random sample of CI systems. Its temporal drift, related workflow migrations and revision overlap limit independent transfer claims. Two-OS replay checks runtime portability of the supplied cases; it does not add independent scientific samples.

\section{Conclusion}
Conditional CI evidence requires a requirement set that survives independently of its observations. Source-derived slots, persistent capture alternatives and sound unknown handling make that separation executable and yield a fixed-boundary omission-safety property. Independent source reconstruction checks that a certificate has not removed the obligation it was meant to justify. The archive findings distinguish successful extraction from established execution identity, preserving the information needed to decide whether qualification is actually closed.

\bibliographystyle{plainurl}
\bibliography{references,additional}
\end{document}